\newtheorem{theorem}{Theorem}[section]
\newtheorem{lemma}[theorem]{Lemma}
\newtheorem{definition}{Definition}[section]
\makeatletter\renewcommand{\@biblabel}[1]{#1.}\makeatother
\newtcolorbox{empheqboxed}{colback=gray!20, 
 colframe=white,
 width=\textwidth,
 sharpish corners,
 top=0mm, 
 bottom=0pt
}
\title{On Bailey pairs for \texorpdfstring{$\mathcal{N}=2$}{N=2} supersymmetric gauge theories on \texorpdfstring{$S_b^3/\mathbb{Z}_r$}{Sb3/Zr}}
\author{Ilmar Gahramanov$^{a,b,c}$, Batuhan Keskin$^{a, d}$, Dilara Kosva$^{a}$ and  Mustafa Mullahasanoglu$^{a}$ }
\affiliation{
	
	$^a$ {Department of Physics, Bogazici University, 34342 Bebek, Istanbul, Turkey}\\[-0.5cm]
	
	$^{b}$ Steklov Mathematical Institute of Russ. Acad. Sci.,
Gubkina str. 8, 119991 Moscow, Russia\\[-0.5cm]
	
	$^{c}$ Department of Mathematics, Khazar University,  Mehseti St. 41, AZ1096, Baku, Azerbaijan \\[-0.5cm]

 $^{d}$ Department of Electrical and Electronics Engineering,  Bogazici University, 34342 Bebek, Istanbul, Turkey\\[-0.5cm]

}
\emailAdd{ilmar.gahramanov@boun.edu.tr}
\emailAdd{batuhan.keskin@boun.edu.tr}
\emailAdd{dilara.kosva@boun.edu.tr}
\emailAdd{mustafa.mullahasanoglu@boun.edu.tr}
\abstract{We study Bailey pairs construction for hyperbolic hypergeometric integral identities acquired via the duality of lens partitions functions for the three-dimensional $\mathcal N=2$ supersymmetric gauge theories on $S_b^3/\mathbb{Z}_r$. The novel Bailey pairs are constructed for the star-triangle relation, the star-star relation, and the pentagon identity. The first two of them are integrability conditions for the Ising-type integrable lattice models. The last one corresponds to the representation of the basic $2-3$ Pachner move for triangulated 3-manifolds.
}
\keywords{Bailey pairs, hyperbolic hypergeometric functions, star-triangle relation, star-star relation, pentagon identity, supersymmetric duality.}
\begin{document}
\maketitle
\flushbottom

\section{Introduction}

In recent years, the remarkable concept of hypergeometric identities sits at the intersection of diverse studies such as exact results in supersymmetric gauge theories \cite{Gang:2019juz,Benini:2011nc,Imamura:2012rq,Imamura:2013qxa,Nieri:2015yia,Alday:2012au,Yamazaki:2013fva,Honda:2016vmv,Nedelin:2016gwu,Spiridonov:2009za,Spiridonov2014,Krattenthaler:2011da, Gahramanov:gka,Dolan:2011rp} and their mathematical structures interacting with various fields in mathematics, see, e.g., \cite{Gahramanov:2015tta, Tachikawa:2017byo, Spiridonov:2019kto, Gahramanov:2022qge}, star-triangle relation (Yang-Baxter equation) \cite{Spiridonov:2010em,Yamazaki:2012cp,Kels:2015bda,Yagi:2015lha,Gahramanov:2015cva,Gahramanov:2016ilb,Sarkissian:2018ppc,Eren:2019ibl,de-la-Cruz-Moreno:2020xop,Bozkurt:2020gyy} or star-star relation \cite{Yamazaki:2015voa,Kels:2017toi,Catak:2021coz, Mullahasanoglu:2021xyf} for spin lattice models, knot theory \cite{Kashaev:2012cz}, pentagon identities \cite{Kashaev:2012cz,Kashaev:2014rea,kashaev2014euler,Gahramanov:2013rda,Gahramanov:2014ona,Gahramanov:2016wxi,Bozkurt:2018xno,Jafarzade:2018yei, Dede:2022ofo}, Bailey pairs \cite{spiridonov2004bailey,Gahramanov:2015cva,Brunner:2017lhb,Spiridonov_2019,Gahramanov:2021pgu}, quantum algebras \cite{Hadasz:2013bwa,Bozkurt:2020gyy,Fan:2021bwt,Apresyan:2022erh}, etc.

In this work, we consider certain three-dimensional $\mathcal N=2$ supersymmetric dualities on $S^3_b/\mathbb{Z}_r$. These dualities have been studied via the gauge/YBE correspondence, which connects dualities with integrable models in statistical models, see for a comprehensive review \cite{Gahramanov:2017ysd,Yamazaki:2018xbx}. 

Here we consider three-dimensional $SU(2)$ gauge theory with six flavors $S^3_b/\mathbb{Z}_r$. The corresponding star-triangle relation for this theory was constructed in \cite{Gahramanov:2016ilb}. The reduction of the gauge symmetry to $U(1)$ via the gauge symmetry breaking  gives another solution to the star-triangle relation which is realized as the generalized Faddeev-Volkov model \cite{Bozkurt:2020gyy,Sarkissian:2018ppc}, since the $r=1$ case gives the Faddeev-Volkov model \cite{Bazhanov:2007mh,Bazhanov:2007vg} corresponding to $\mathcal N=2$ supersymmetric dual theories on $S^3_b$ \cite{Hama:2011ea}. The star-star relations \cite{baxter:1997ssr} of these two models are constructed, and for the factorized interaction-round-a-face (IRF) spin models,  IRF-type YBE are presented in \cite{Catak:2021coz,Mullahasanoglu:2021xyf}. The integral identity obtained by supersymmetric duality with  the gauge group $U(1)$ can also be written as a pentagon identity \cite{Bozkurt:2020gyy}.

We present new Bailey pairs for the hyperbolic hypergeometric integral identities. The construction of Bailey pairs for the star-triangle relation leads to acquiring the vertex-type\footnote{In the vertex-type models, the spins (continuous and discrete) are located at the edges, and the interactions of the spins are via vertices.} YBE via the Coxeter relations \cite{Gahramanov:2015cva}. Bailey pairs for the star-star relations and the pentagon identity mentioned above are also constructed.

The organization of the rest of this paper is as follows. In Section 2, we briefly recall the mathematical tools and introduce integral identities resulting from the equality of the partition functions of $\mathcal{N}=2$ supersymmetric gauge theories on $S^3_b/\mathbb{Z}_r$. In Section 3, we present the star-triangle relations, the star-star relations, and the pentagon identities obtained via the supersymmetric dualities. In Section 4, we construct Bailey pairs for each integral identity considered in previous sections. In section 5, we conclude our results and present some further studies. The appendix consists of the construction of the Bailey pair for the star-triangle relation. 


\section{Seiberg dualities on \texorpdfstring{$S_b^3/\mathbb{Z}_r$}{Sb3/Zr}}

In this section, we consider Seiberg dualities \cite{Seiberg:1994pq} in three-dimensional $\mathcal N=2$ theories \cite{Intriligator:1996ex, Aharony:1997bx}. One of the evidences\footnote{The equality of the  superconformal indices is also evidence for duality, see e.g. \cite{Krattenthaler:2011da,Kapustin:2011jm,Gahramanov:2013rda,Gahramanov:2016wxi}} for dualities is the equality of partition functions\footnote{Supersymmetric partition functions are studied on sphere (e.g. \cite{Kapustin:2010xq}), squashed sphere (e.g. \cite{Dolan:2011rp,Gahramanov:gka,Amariti:2015vwa}), and squashed lens space  (e.g. \cite{Benini:2011nc, Imamura:2012rq,Imamura:2013qxa})}.

The three-dimensional $\mathcal N=2$ partition functions on the squashed lens space $S^3_b/\mathbb{Z}_r$ have been computed via dimensional reduction of the four-dimensional lens superconformal index \cite{Benini:2011nc,Yamazaki:2013fva,Eren:2019ibl} and via the supersymmetric localization technique \cite{Imamura:2012rq,Imamura:2013qxa}. Such theories have been studied in \cite{Nieri:2015yia,Gahramanov:2016ilb, Bozkurt:2020gyy,Catak:2021coz, Mullahasanoglu:2021xyf}.	

\subsection{3d \texorpdfstring{$SU(2)$}{SU(2)} duality}

Our starting point is a three-dimensional $\mathcal N=2$ $SU(2)$ gauge theory with six fundamentals and six anti-fundamentals. The confined dual theory consists of only fifteen chiral multiplets in the totally antisymmetric tensor representation of the flavor group. The equality of the partition functions can be written as an integral identity in terms of hyperbolic hypergeometric functions\footnote{For the integral identity written in terms of the improved double sine function, see, e.g., \cite{Gahramanov:2016ilb}}, 
\begin{align}
\begin{aligned}
\label{SU2identity}
	 \sum_{y=0}^{[ r/2 ]}\epsilon (y) \int _{-\infty}^{\infty} 
& \frac{\prod_{i=1}^6\gamma^{(2)}(-i(a_i\pm z)-i\omega_1(u_i\pm y);-i\omega_1r,-i\omega)}{\gamma^{(2)}(\pm 2iz\pm 2i\omega_1y;-i\omega_1r,-i\omega)}  \\
	\times & \frac{\gamma^{(2)}(-i(a_i\pm z)-i\omega_2(r-(u_i\pm y));-i\omega_2r,-i\omega)}{\gamma^{(2)}(\pm2iz-i\omega_2(r\pm2y);-i\omega_2r,-i\omega)} 
	\frac{dz}{2r\sqrt{-\omega_1\omega_2}}
    \\ =\prod_{1\leq i<j\leq 6} & \gamma^{(2)}(-i(a_i + a_j)-i\omega_1(u_i + u_j);-i\omega_1r,-i\omega)  \\ & \times 
	\gamma^{(2)}(-i(a_i + a_j)-i\omega_2(r-(u_i + u_j));-i\omega_2r,-i\omega)\; ,
\end{aligned}
\end{align}	
with the balancing conditions\footnote{When the balancing condition is taken $\sum_{i=1}^6 u_i=mr$ where $m$ is an integer, there should be a sign factor $ e^{\frac{\pi i}{2}mr^2(m-1)(2m-1)}$ on the left-hand side of the integral identity (\ref{SU2identity}), seem \cite{Gahramanov:2016ilb} .} are $\sum^6_{i=1} a_i=\omega$ and\space$\sum_{i=1}^6 u_i=0$, where we introduced  $\omega :=\omega_1+\omega_2$. Also, $\epsilon (0)=\epsilon ({[ r/2 ]})=1$ and $\epsilon(y)=2$ otherwise.  Here the function $\gamma^{(2)}(z; \omega_1, \omega_2)$ is the so-called hyperbolic gamma function\footnote{Different versions of the hyperbolic gamma function can be seen as the double sine function \cite{Kharchev:2001rs,Ponsot:2000mt}, the non-compact quantum dilogarithm \cite{Faddeev:1999fe,Faddeev:2000if,Ponsot:2000mt,Volkov:2005zrq,Bazhanov:2010kz}, the modified $q$-gamma
function \cite{Spiridonov-essays}, etc.} \cite{Ruijsenaars:1997:FOA,van2007hyperbolic, Andersen:2014aoa} which is the main tool in this study. 
One of the several representations\footnote{For various integral representations, see, e.g. \cite{Faddeev:1995nb,woronowicz2000quantum} and one can also introduce the infinite product representation
	\begin{align}
	\gamma^{(2)}(z;\omega_{1},\omega_{2})=e^{\frac{\pi i}{2}B_{2,2}(z;\omega_{1},\omega_{2})}\frac{(e^{-2\pi i\frac{z}{\omega_{2}}}\tilde{q};\tilde{q})}{(e^{-2\pi i\frac{z}{\omega_{1}}};q)} \; ,
	\end{align}
	where parameters are $\tilde{q}=e^{2\pi i \omega_{1}/\omega_{2}}$ and $q=e^{-2\pi i \omega_{2}/\omega_{1}}$ and
	 the Bernoulli polynomial is
	\begin{align}
	B_{2,2}(z;\omega_{1},\omega_{2})=\frac{z^2-z(\omega_{1}+\omega_{2})}{\omega_{1}\omega_{2}}+\frac{\omega_{1}^2+3\omega_{1}\omega_{2}+\omega_{2}^2}{6\omega_{1}\omega_{2}}.
	\end{align}
	}  
of this special function is the following\footnote{One can list many areas of study for this function, but we mention fewer examples from the areas of mathematical and theoretical physics such as knot theory \cite{hikami2001hyperbolic,Hikami2007,hikami2014braiding,Chan:2017qnw} supersymmetric gauge theory \cite{Teschner:2012em} integrable models of statistical mechanics \cite{Bazhanov:2007mh,Bazhanov:2007vg} special functions \cite{van2007hyperbolic}.} 
	\begin{align}
	\gamma^{(2)}(z;\omega_{1},\omega_{2})=\exp{\left(-\int_{0}^{\infty}\frac{dx}{x}\left[\frac{\sinh{x(2z-\omega_{1}-\omega_{2})}}{2\sinh{(x\omega_{1})}\sinh{(x\omega_{2})}}-\frac{2z-\omega_{1}-\omega_{2}}{2x\omega_{1}\omega_{2}}\right]\right)} \; ,
	\end{align}
	where $Re(\omega_{1}),Re(\omega_{2})>0$ and $Re(\omega_{1}+\omega_{2})>Re(z)>0$. 
We will mainly use the reflection property of the hyperbolic gamma function
\begin{align}
    \gamma^{(2)}(z;\omega_{1},\omega_{2})\gamma^{(2)}(\omega_{1}+\omega_{2}-z;\omega_{1},\omega_{2})=1 \:.
\end{align}
and the following shorthand notation
\begin{align}
    \gamma^{(2)}(\pm z;\omega_{1},\omega_{2})=\gamma^{(2)}( z;\omega_{1},\omega_{2})\gamma^{(2)}(-z;\omega_{1},\omega_{2}) \:.
\end{align}

Note that the case $r=1$ (see, e.g., \cite{van2007hyperbolic}) of the integral identity (\ref{SU2identity}) corresponds to the duality of supersymmetric gauge theories on $S^3_b$.

\subsection{3d \texorpdfstring{$U(1)$}{U(1)} duality}
	
One obtains the following integral identity via breaking the gauge symmetry \cite{Bozkurt:2020gyy} (see also \cite{Spiridonov:2010em, Sarkissian:2018ppc}) from $SU(2)$ to $U(1)$ in the duality (\ref{SU2identity})
\begin{align}
\begin{aligned}
\label{U1identity}
    \sum_{y=0}^{[ r/2 ]}\epsilon (y) e^{\frac{\pi i}{2}C}\int _{-\infty}^{\infty} \prod_{i=1}^3 & \gamma^{(2)}(-i(a_i-z)-i\omega_1(u_i- y);-i\omega_1r,-i\omega)  \\
	\times & \gamma^{(2)}(-i(a_i-z)-i\omega_2(r-(u_i- y));-i\omega_2r,-i\omega)  \\
	\times & \gamma^{(2)}(-i(b_i+z)-i\omega_1(v_i+ y);-i\omega_1r,-i\omega) 
	\\
	\times & \gamma^{(2)}(-i(b_i+z)-i\omega_2(r-(v_i+y));-i\omega_2r,-i\omega)
	\frac{dz}{r\sqrt{-\omega_1\omega_2}}\\
	= \prod_{i,j=1}^3& \gamma^{(2)}(-i(a_i+ b_j)-i\omega_1(u_i+ v_j);-i\omega_1r,-i\omega)\\ \times& \gamma^{(2)}(-i(a_i+ b_j)-i\omega_2(r-(u_i+ v_j));-i\omega_2r,-i\omega))\;, 
	\end{aligned}
	\end{align}	
where the balancing conditions are $\sum_{i=1}^3 a_i+b_i=\omega$ and $\sum_{i=1}^3 u_i+v_i=0$ and the sign factor is $C=-2y+\sum_{i=1}^3(u_i-v_i)$. It is possible to shift the discrete parameters $u_i$ and $v_i$ and obtain a new  balancing condition $\sum_{i=1}^3 u_i+v_i=r$. In this case $C=0$  in (\ref{U1identity}). 




\section{Integrability conditions and the basic 2-3 Pachner move }

\subsection{Star-triangle relation }

In the transfer matrix method \cite{Baxter:1982zz} for Ising-type models, it is sufficient to write a star-triangle relation to obtain the integrability property of the lattice spin model. Here we are interested in Ising-like models with discrete $m_i$ and continuous $x_i$ spin variables. We denote the discrete and continuous spins together in the form of $\sigma_i:=(x_i,m_i)$.

The Boltzmann weights of the models discussed here have the reflection property $W(\sigma_i, \sigma_j)= W(\sigma_j, \sigma_i)$ and the crossing symmetry $\overline{W}_{\alpha_i,\Tilde{\alpha}_j}(\sigma_i,\sigma_j)=W_{\eta-\alpha_i,\beta-\Tilde{\alpha}_j}(\sigma_i,\sigma_j)$, which means that one can write vertical interactions in terms of horizontal interactions. Hence, we can write the star-triangle relation as the following 
\begin{align}
\begin{aligned}
   \sum_{m_0} \int dx_0 \: S(\sigma_0) \: W_{\alpha_1,\Tilde{\alpha}_1}(\sigma_1,\sigma_0)W_{\alpha_2,\Tilde{\alpha}_2}(\sigma_2,\sigma_0)W_{\alpha_3,\Tilde{\alpha}_3}(\sigma_3,\sigma_0) 
    \makebox[10em]{}
   \\ \makebox[6em]{}
   =\mathcal{R} \: W_{\eta-\alpha_1,\beta-\Tilde{\alpha}_1}(\sigma_1,\sigma_2)W_{\eta-\alpha_2,\beta-\Tilde{\alpha}_2}(\sigma_1,\sigma_3)W_{\eta-\alpha_3,\beta-\Tilde{\alpha}_3}(\sigma_2,\sigma_3) 
   \label{str}\:,
\end{aligned}
	\end{align}
where the constraints on the spectral parameters are $\alpha_1+\alpha_2+\alpha_3=\eta$ (note that these are continuous and in our cases $\eta=-\frac{\omega}{2}$) and $\Tilde{\alpha}_1+\Tilde{\alpha}_2+\Tilde{\alpha}_3=\beta$ (note that these are discrete and in our cases $\beta=0$) with crossing parameters $\eta$ and $\beta$. The functions $S(\sigma_0)$ and $\mathcal{R}$ are the self-interaction contribution and the spin-independent functions, respectively.

The identity (\ref{SU2identity}) turns to the star-triangle relation\footnote{In \cite{Gahramanov:2016ilb}, it is firstly appeared as normalized and with only continuous spectral parameters.} when new variables 
\begin{align}
    	\begin{aligned}
	a_i & =-\alpha_i+x_{i}\;, ~~~~~~~~~~~~~\; a_{i+3}=-\alpha_i-x_{i}  \;, \\
	u_i & =-\Tilde{\alpha}_i+y_{i}\;, ~~~~~~~~~~~~~\; u_{i+3}=-\Tilde{\alpha}_i-y_{i}\:,
	\end{aligned}
	\end{align}
are introduced. The Boltzmann weights can be written as  
\begin{equation}
\begin{aligned}\label{b1}
    W_{\alpha_i,\Tilde{\alpha}_i}(x_i,x_j&,y_i,y_j)=
    \\&
    \gamma^{(2)}(-i(-\alpha_i+x_i\pm x_j)-i\omega_1(-\Tilde{\alpha}_i+y_i\pm y_j);-i\omega_{1}r,-i\omega)  \\&\times 
    \gamma^{(2)}(-i(-\alpha_i+x_i\pm x_j)-i\omega_2(r-(-\Tilde{\alpha}_i+y_i\pm y_j));-i\omega_2r,-i\omega)  \\&\times 
    \gamma^{(2)}(-i(-\alpha_i-x_i\pm x_j)-i\omega_1(-\Tilde{\alpha}_i-y_i\pm y_j);-i\omega_{1}r,-i\omega)  \\&\times 
    \gamma^{(2)}(-i(-\alpha_i-x_i\pm x_j)-i\omega_2(r-(-\Tilde{\alpha}_i-y_i\pm y_j));-i\omega_2r,-i\omega) \;.
\end{aligned}
	\end{equation}
	
The spin-independent weight function $\mathcal{R}$ depending only on spectral parameters is the same for both models
\begin{align}
	\mathcal{R}=\prod_{j=1}^3\gamma^{(2)}(2i\alpha_j+2i\omega_1\Tilde{\alpha}_j;-i\omega_{1}r,-i\omega)\gamma^{(2)}(2i\alpha_j-i\omega_{2}(r+2\Tilde{\alpha}_j);-i\omega_{2}r,-i\omega)\; . \label{R}
\end{align}
	
However, a self-interaction contribution is not trivial in the $SU(2)$ model (the identity (\ref{SU2identity})) and has the following form
\begin{equation}
\begin{aligned}
 S(\sigma_0) = \frac{ \epsilon (n)}{\gamma^{(2)}(\pm 2iu \pm   2i\omega_1n;-i\omega_1r,-i\omega)\gamma^{(2)}(\pm2iu-i\omega_2(r\pm2n);-i\omega_2r,-i\omega)}\:,
\end{aligned}
\end{equation}
where $\epsilon (n)$ disappears if one changes boundaries of the summation as in \cite{Bozkurt:2020gyy} (see Appendix in \cite{Eren:2019ibl}). 


The same procedure is applied to (\ref{U1identity})\footnote{This star-triangle relation for generalized Faddeev-Volkov model appeared in \cite{Sarkissian:2018ppc,Bozkurt:2020gyy} has only continuous spectral parameters.  } by re-defining variables as 
\begin{align}
    	\begin{aligned}
	a_i & =-\alpha_i+x_{i}\;, ~~~~~~~~~~~~~\; b_{i}=-\alpha_i-x_{i}  \;, \\
	u_i & =-\Tilde{\alpha}_i+y_{i}\;, ~~~~~~~~~~~~~\; v_{i}=-\Tilde{\alpha}_i-y_{i}\:,
	\end{aligned}
	\end{align}
then, the Boltzmann weights become
\begin{align}
\begin{aligned} \label{b2}
    W_{\alpha_i,\Tilde{\alpha}_i}(x_i,x_j&,y_i,y_j) = \\&
    e^{-\pi i (y_i+y_j)}
    \gamma^{(2)}(-i(-\alpha_i+x_i- x_j)-i\omega_1(-\Tilde{\alpha}_i+y_i- y_j);-i\omega_{1}r,-i\omega)  \\&\times 
    \gamma^{(2)}(-i(-\alpha_i+x_i- x_j)-i\omega_2(r-(-\Tilde{\alpha}_i+y_i- y_j));-i\omega_2r,-i\omega)  \\&\times 
    \gamma^{(2)}(-i(-\alpha_i-x_i+ x_j)-i\omega_1(-\Tilde{\alpha}_i-y_i+ y_j);-i\omega_{1}r,-i\omega)  \\&\times 
    \gamma^{(2)}(-i(-\alpha_i-x_i+ x_j)-i\omega_2(r-(-\Tilde{\alpha}_i-y_i+ y_j));-i\omega_2r,-i\omega) \;,
\end{aligned}
	\end{align}
where the exponent term vanishes if we change the balancing condition mentioned in (\ref{U1identity}). This model has no self-interaction term and the spin-independent function is the same as (\ref{R}).

\subsection{Star-star relation }

One can obtain another fundamental integrability condition in statistical mechanics which is the star-star relation\footnote{For the star-star relation in the context of supersymmetric partition functions for dualities, and symmetries of beta hypergeometric integrals, see  \cite{Spiridonov:2008zr, Dimofte:2012pd, Kels:2017toi}.} \cite{baxter:1997ssr} in the existence of star-triangle relation. In some lattice spin models, Boltzmann weights satisfy the star-star relation but not the star-triangle relation, see e.g. \cite{baxter:1997ssr, Bazhanov:2011mz, Bazhanov:2013bh}. 

The star-star relation has the following form
\begin{align}
\begin{aligned}
&  \sum_{m_{0}} \int d x_{0} W_{\alpha_1,\Tilde{\alpha}_1}\left(\sigma_{1}, \sigma_{0}\right) W_{\alpha_2,\Tilde{\alpha}_2}\left(\sigma_{0}, \sigma_{2}\right) W_{\alpha_3,\Tilde{\alpha}_3}\left(\sigma_{3}, \sigma_{0}\right) W_{\alpha_4,\Tilde{\alpha}_4}\left(\sigma_{0}, \sigma_{4}\right) 
  \\
&\quad \quad \quad\quad=\frac{W_{2\eta-\alpha_1-\alpha_2,2\beta-\Tilde{\alpha}_1-\Tilde{\alpha}_2}(\sigma_1,\sigma_2)W_{2\eta-\alpha_1-\alpha_4,2\beta-\Tilde{\alpha}_1-\Tilde{\alpha}_4}(\sigma_1,\sigma_4)}
{W_{2\eta-\alpha_3-\alpha_4,\beta-\Tilde{\alpha}_3-\Tilde{\alpha}_4}(\sigma_4,\sigma_3)W_{2\eta-\alpha_2-\alpha_3,2\beta-\Tilde{\alpha}_2-\Tilde{\alpha}_3}(\sigma_2,\sigma_3)}
  \\ & \quad\quad\quad\quad\times 
\sum_{m_{0}} \int d x_{0} W_{\alpha_3,\Tilde{\alpha}_3}\left(\sigma_{0},\sigma_{1}\right) W_{\alpha_4,\Tilde{\alpha}_4}\left(\sigma_{2},\sigma_{0}\right) W_{\alpha_1,\Tilde{\alpha}_1}\left(\sigma_{0},\sigma_{3}\right) W_{\alpha_2,\Tilde{\alpha}_2}\left(\sigma_{4},\sigma_{0}\right) \:,
    \label{gf}
\end{aligned}
	\end{align}
where the spectral parameters satisfy the conditions $\sum_{i=1}^4\alpha_i=2\eta$ and $\sum_{i=1}^4\Tilde{\alpha}_i=2\beta$.

Using the hyperbolic hypergeometric integral identity (\ref{SU2identity}) (the star-triangle relation), one can obtain the following integral identity \cite{Mullahasanoglu:2021xyf} presented as a star-star relation
\begin{align}
\begin{aligned}
      \sum_{y=0}^{[ r/2 ]}\epsilon (y) \int _{-\infty}^{\infty} \frac{\prod_{i=1}^8 \gamma^{(2)}(-i(a_i\pm z)-i\omega_1(u_i\pm y);-i\omega_1r,-i\omega)}{\gamma^{(2)}(\pm 2iz\pm i\omega_12y;-i\omega_1r,-i\omega)}   \\
     \times \frac{\gamma^{(2)}(-i(a_i\pm z)-i\omega_2(r-(u_i\pm y));-i\omega_2r,-i\omega)}{\gamma^{(2)}(\pm2iz-i\omega_2(r\pm2y);-i\omega_2r,-i\omega)} 
     \frac{dz}{r\sqrt{-\omega_1\omega_2}}
       \\ 
     =
     \frac{\prod_{1\leq i<j\leq 4}\gamma^{(2)}(-i(a_i + a_j)-i\omega_1(u_i + u_j);-i\omega_1r,-i\omega) }
     {\prod_{5\leq i<j\leq 8}\gamma^{(2)}(-i(\Tilde{a}_i+\Tilde{a}_j)-i\omega_1(\Tilde{u}_i+\Tilde{u}_j);-i\omega_1r,-i\omega)  }   \\
    \times \frac{\prod_{1\leq i<j\leq 4}  \gamma^{(2)}(-i(a_i + a_j)-i\omega_2(r-(u_i + u_j));-i\omega_2r,-i\omega)}
     {\prod_{5\leq i<j\leq 8} \gamma^{(2)}(-i(\Tilde{a}_i+\Tilde{a}_j)-i\omega_2(r-(\Tilde{u}_i+\Tilde{u}_j));-i\omega_2r,-i\omega)}   \\
     \times   \sum_{m=0}^{[ r/2 ]}\epsilon (m) \int _{-\infty}^{\infty} \frac{\prod_{i=1}^8\gamma^{(2)}(-i(\Tilde{a}_i\pm x)-i\omega_1(\Tilde{u}_i\pm m);-i\omega_1r,-i\omega)}{\gamma^{(2)}(\pm 2ix\pm i\omega_12m;-i\omega_1r,-i\omega)}   \\
     \times \frac{\gamma^{(2)}(-i(\Tilde{a}_i\pm x)-i\omega_2(r-(\Tilde{u}_i\pm m));-i\omega_2r,-i\omega)}{\gamma^{(2)}(\pm2ix-i\omega_2(r\pm2m);-i\omega_2r,-i\omega)}
     \frac{dx}{r\sqrt{-\omega_1\omega_2}}
      \:,\label{newssr}
\end{aligned}
	\end{align}
with the balancing conditions $\sum_{i=1}^8a_i=2\omega$ and $\sum_{i=1}^8u_i=0$, and parameters are identified as 
\begin{align}
\begin{aligned}
     \tilde{a}_i & =  a_i+s, & \tilde{u}_i & =u_i+p,  & \text{if} \;\;\; i=1,2,3,4 \:,
     \\ 
     \tilde{a}_i & =  a_i-s,  & \tilde{u}_i & = u_i-p, & \text{if} \;\;\; i=5,6,7,8\:,
\end{aligned}
\end{align}
where 
\begin{align}
\begin{aligned}
s & =\frac{1}{2}\left(\omega-\sum_{i=1}^4a_i\right)=\frac{1}{2}\left(-\omega_1-\omega_2+\sum_{i=5}^8a_i\right)\:, \\
p & =-\frac{1}{2}\left(\sum_{i=1}^4u_i\right)=\frac{1}{2}\left(\sum_{i=5}^8u_i\right) \:.
\end{aligned}
\end{align}

The following identity is the star-star relation \cite{Catak:2021coz} of the generalized Faddeev-Volkov model 
\begin{align}
\begin{aligned}
  \sum_{y=0}^{[ r/2 ]}\epsilon (y) \int _{-\infty}^{\infty} \prod_{i=1}^4  \gamma^{(2)}(-i(a_i-z)-i\omega_1(u_i- y);-i\omega_1r,-i\omega)   \\
     \times  \gamma^{(2)}(-i(a_i-z)-i\omega_2(r-(u_i- y));-i\omega_2r,-i\omega)   \\
    \times  \gamma^{(2)}(-i(b_i+z)-i\omega_1(v_i+ y);-i\omega_1r,-i\omega)  
    \\
     \times  \gamma^{(2)}(-i(b_i+z)-i\omega_2(r-(v_i+y));-i\omega_2r,-i\omega)  
     \frac{dz}{r\sqrt{-\omega_1\omega_2}}  \\
     =\frac{e^{\frac{\pi i}{2}\sum_{i=1}^2(u_i-v_i)}}{e^{\frac{\pi i}{2}\sum_{i=3}^4(\tilde{u_i}-\tilde{v_i})}}
   \frac{\prod_{i,j=1}^2 \gamma^{(2)}(-i(a_i+ b_j)-i\omega_1(u_i+ v_j);-i\omega_1r,-i\omega) }{\prod_{i,j=3}^4 \gamma^{(2)}(-i(\tilde{a_i}+\tilde{b_i})-i\omega_1(\tilde{u_i}+\tilde{v_i});-i\omega_1r,-i\omega) } 
   \\ \times  
     \frac{\prod_{i,j=1}^2  \gamma^{(2)}(-i(a_i+ b_j)-i\omega_2(r-(u_i+ v_j));-i\omega_2r,-i\omega)}{\prod_{i,j=3}^4  \gamma^{(2)}(-i(\tilde{a_i}+\tilde{b_i})-i\omega_2(r-(\tilde{u_i}+\tilde{v_i}));-i\omega_2r,-i\omega) } 
    \\ \times  
     \sum_{m=0}^{[ r/2 ]}\epsilon (m) \int _{-\infty}^{\infty} \prod_{i=1}^4  \gamma^{(2)}(-i(\tilde{a_i}-x)-i\omega_1(\tilde{u_i}- m);-i\omega_1r,-i\omega)   \\
     \times  \gamma^{(2)}(-i(\tilde{a_i}-x)-i\omega_2(r-(\tilde{u_i}- m));-i\omega_2r,-i\omega)   \\
    \times  \gamma^{(2)}(-i(\tilde{b_i}+x)-i\omega_1(\tilde{v_i}+ m);-i\omega_1r,-i\omega) 
    \\
     \times  \gamma^{(2)}(-i(\tilde{b_i}+x)-i\omega_2(r-(\tilde{v_i}+m));-i\omega_2r,-i\omega) \frac{dx}{r\sqrt{-\omega_1\omega_2}}  \:,
\end{aligned}
	\end{align}
where the balancing conditions are $\sum_{i=1}^4a_i+b_i=2\omega$ and $\sum_{i=1}^4u_i+v_i=0$,
and we used the following choice of parameters,
\begin{align}
\begin{aligned}
     \tilde{a}_i & =  a_i+s, & \tilde{b}_i & =  b_i+s, & \tilde{u}_i & =u_i+p, & \tilde{v}_i  & = v_i+p, & \text{if} \;\;\; i=1,2\:,
     \\ 
     \tilde{a}_i & =  a_i-s, & \tilde{b}_i & =  b_i-s, & \tilde{u}_i & = u_i-p, & \tilde{v}_i & =v_i-p, & \text{if} \;\;\; i=3,4\:,
\end{aligned}
\end{align}
where 
\begin{align}
\begin{aligned}
s & =\frac{1}{2}(\omega+a_1+a_2+b_1+b_2)=\frac{1}{2}(\omega-a_3-a_4-b_3-b_4)\:, \\
p & =-\frac{1}{2}(u_1+u_2+v_1+v_2)=\frac{1}{2}(u_3+u_4+v_3+v_4) \:.
\end{aligned}
\end{align}

\subsection{The pentagon identity }
Pentagon relation \cite{Kashaev:2014rea,kashaev2014euler} has a meaning of the 2-3 Pachner move \cite{pachner1991pl} for triangulated three-dimensional manifolds and can be formally written as
\begin{align}
	 \mathcal{B}\mathcal{B}\mathcal{B}=\mathcal{B}\mathcal{B} \:.
	\label{pentagondefn}
	\end{align}

Here we write the equation (\ref{U1identity}) as an integral pentagon identity \cite{Bozkurt:2020gyy}. It can be interpreted as a topological invariant of corresponding 3–manifold via \textit{3d–3d} correspondence \cite{Dimofte:2011ju, Dimofte:2011py} building bridges between three dimensional $\mathcal{N}=2$ supersymmetric gauge theories and triangulated 3-manifolds.

Then the hyperbolic hypergeometric solution to the pentagon identity can be obtained by the following definition
\begin{align}
\begin{aligned}
	\mathcal{B}(z_1,u_1;z_2,u_2)=\frac{\gamma^{(2)}(-iz_1-i\omega_1u_1;-i\omega_1r,-i\omega)\gamma^{(2)}(-iz_1-i\omega_2(r-u_1);-i\omega_2r,-i\omega))}{\gamma^{(2)}(-i(z_1+z_2)-i\omega_1(u_1+u_2);-i\omega_1r,-i\omega))}   \\\frac{\gamma^{(2)}(-iz_2-i\omega_1u_2;-i\omega_1r,-i\omega)\gamma^{(2)}(-iz_2-i\omega_2(r-u_2);-i\omega_2r,-i\omega)}{\gamma^{(2)}(-i(z_1+z_2)-i\omega_2(r-u_1-u_2);-i\omega_2r,-i\omega)} \; ,
	\label{pentagon_b}
\end{aligned}
	\end{align}
and the equation (\ref{U1identity}) turns to the integral pentagon identity 
	\begin{align}
\begin{aligned}
	\frac{1}{r\sqrt{-\omega_1\omega_2}}\sum_{y=0}^{\lfloor r/2 \rfloor} e^{\frac{\pi iC}{2}} \int _{-\infty}^{\infty} dz \prod_{i=1}^3\mathcal{B}(a_i-z,u_i-y;b_i+z,v_i+y) \\=\mathcal{B}(a_1+b_2,u_1+v_2;a_2+b_3,u_2+v_3)\mathcal{B}(a_1+b_3,u_1+v_3;a_2+b_1,u_2+v_1) \:,
	\label{pentagon}
\end{aligned}
	\end{align}
where the sign factor and the balancing conditions are the same as in (\ref{U1identity}).

\section{Bailey pairs}
Influenced by Rogers' work in proving combinatorial identities which are now known as the Rogers-Ramanujan identities, W.N. Bailey introduced the following lemma to abstract the notions underlying the proofs \cite{bailey1948identities},

\begin{lemma}\label{bailey transform}
If the series $\{\alpha\}_{n\geq 0}$, $\{\beta\}_{n\geq 0}$, $\{\delta\}_{n\geq 0}$, $\{\gamma\}_{n\geq 0}$, $\{u\}_{n\geq 0}$ and $\{v\}_{n\geq 0}$ satisfy
\begin{equation*}
\beta_{n}=\sum_{r=0}^{n} \alpha_{r} u_{n-r} v_{n+r} \;,
\end{equation*}
and
\begin{equation*}
\gamma_{n}=\sum_{r=n}^{\infty} \delta_{r} u_{r-n} v_{r+n} \;,
\end{equation*}
then
\begin{equation*}
\sum_{n=0}^{\infty} \alpha_{n} \gamma_{n}=\sum_{n=0}^{\infty} \beta_{n} \delta_{n} \;. 
\end{equation*}
\end{lemma}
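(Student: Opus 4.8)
The plan is to prove Bailey's lemma directly by substituting the defining expression for $\gamma_n$ into the left-hand side $\sum_{n\geq 0}\alpha_n\gamma_n$ and interchanging the order of summation. First I would write
\begin{equation*}
\sum_{n=0}^{\infty}\alpha_n\gamma_n=\sum_{n=0}^{\infty}\alpha_n\sum_{r=n}^{\infty}\delta_r u_{r-n}v_{r+n}\;,
\end{equation*}
and then swap the two sums. The region of summation is $\{(n,r): 0\leq n\leq r<\infty\}$, so after the swap the outer index becomes $r$ running from $0$ to $\infty$ and the inner index becomes $n$ running from $0$ to $r$, giving
\begin{equation*}
\sum_{n=0}^{\infty}\alpha_n\gamma_n=\sum_{r=0}^{\infty}\delta_r\sum_{n=0}^{r}\alpha_n u_{r-n}v_{r+n}\;.
\end{equation*}
The inner sum is, after relabelling the summation variable, exactly $\beta_r=\sum_{n=0}^{r}\alpha_n u_{r-n}v_{r+n}$ by the first hypothesis, so the right-hand side collapses to $\sum_{r=0}^{\infty}\beta_r\delta_r$, which is the claimed identity.

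The main obstacle is the formal justification of interchanging the order of the two infinite sums: as stated, the lemma is an identity of formal power series (or requires an absolute-convergence hypothesis that makes Fubini's theorem applicable). I would handle this by either (i) stating that the identity is understood in the ring of formal power series in some auxiliary variable in which the $\alpha$, $\beta$, $\gamma$, $\delta$, $u$, $v$ carry compatible degrees, so that only finitely many terms contribute to each coefficient and the rearrangement is legitimate term by term, or (ii) assuming the relevant series converge absolutely, in which case the double series $\sum_{0\leq n\leq r}|\alpha_n\delta_r u_{r-n}v_{r+n}|$ converges and Fubini applies. In the applications that follow, the sequences will be chosen so that this convergence (or the formal-power-series interpretation) holds, so for the purpose of this section it suffices to record the rearrangement argument; I would add a remark to that effect rather than dwell on analytic technicalities.

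Everything else is bookkeeping: the only identities invoked are the two hypotheses relating $\beta$ to $\alpha$ and $\gamma$ to $\delta$, together with the triangular description of the index set $\{(n,r): n\leq r\}$, which is symmetric under the roles that make the swap work. I would present the three displayed equations above in sequence with one line of text between each, and close by noting that this is precisely the mechanism — iterating the pair $(\alpha,\beta)$ through the ``Bailey chain'' — that will let us generate new hyperbolic hypergeometric identities from the star-triangle, star-star, and pentagon relations in the remainder of the section.
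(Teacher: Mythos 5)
Your argument is correct and is precisely the ``simple rearrangement of the series'' that the paper itself alludes to but omits (it cites Bailey's remark that the proof is trivial and leaves it at that): substitute the definition of $\gamma_n$, swap the two sums over the triangular region $\{(n,r):0\leq n\leq r\}$, and recognize the inner sum as $\beta_r$. Your additional remark on justifying the interchange (formal-series interpretation or absolute convergence) is a sensible precaution that goes slightly beyond what the paper records, but it does not change the approach.
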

The proof relies on a simple rearrangement of the series and is trivial as noted by Bailey, hence will be omitted. Theorem $\eqref{bailey transform}$ is commonly referred to as the Bailey lemma and for specific choices of the mentioned series, various identities in mathematics can be derived \cite{slater1966generalized,bressoud1981, dbsearsbaileytransform, Schilling1997}.

Following Bailey's work, Andrews \cite{geandrews} formulated a method of deriving infinitely many identities from a known one iteratively. Given two sequences of functions $\{\alpha\}_{k}$ and $\{\beta\}_{k}$ for $k \in \{0,\dots,n \}$, with the relation
\begin{equation}\label{bailey def}
    \beta_{k} = F_k(\alpha_0, \dots, \alpha_n)\;,
\end{equation}
one can construct the functions 
\begin{align}
\beta_k^{(i)} &= G_k\left(\beta_0^{(i-1)}, \dots, \beta_n^{(i-1)}\right)\:,  \hspace{1cm} i\in \mathbb{Z}_{>0 }\:,\\
    \alpha_k^{(i)} &= H_k\left(\alpha_0^{(i-1)}, \dots, \alpha_n^{(i-1)}\right)\:, \hspace{1cm} i\in \mathbb{Z}_{>0 }\:,
\end{align}
that also satisfy $\eqref{bailey def}$ for every $i\in \mathbb{Z}_{>0 }$. Then the functions $\{\alpha^{(i)}\}_{k}$, $\{\beta^{(i)}\}_{k}$ are called a Bailey pair, and they form a chain of infinite length. This notion  can be generalized from chains to lattices and higher-dimensional chains, see, e.g. \cite{milne1992al,lilly1993thec,milne1995consequences}. Apart from their purely mathematical implications, Bailey pairs are also used in superconformal field theories \cite{Berkovich1996,Andrews1998scft} and exactly solvable models of statistical mechanics.
The latter will be the focus of our interest in the next sections.  For a detailed study of the history of the Bailey lemma, see \cite{warnaarhistory}.  

\subsection{Star-triangle relation}

\begin{definition}
The functions $\alpha(x,m;t,p)$ and $\beta(x,m; t,p)$, $x\in \mathbb{C}$, $m \in \mathbb{Z}$ form an integral Bailey pair with respect to parameters $t\in \mathbb{C}$ and $p \in \mathbb{Z}$ if the following relation is satisfied,
\begin{equation}\label{definition}
    \beta(z,m;t,p) = M(t,p)_{z,m;x,j}\alpha(x,j;t,p) \;, 
\end{equation}
where $M(t,p)_{z,m;x,j}$ is an integral-sum operator that integrates and sums over the continuous variable $x\in \mathbb{C}$ and the discrete variable $j\in \mathbb{Z}$ of $\alpha(x,j;t,p)$, respectively. 
\end{definition}

Suppose we have another operator $ D(s,q; y,l; x,k)$ of continuous $s,y,x \in \mathbb{C}$ and discrete variables $q,l,k\in \mathbb{Z}$, such that it satisfies the relation \begin{equation}
    D(s,q; y,l;x,k)  D(-s,-q;y,l;x,k)=1
\end{equation}
that we will refer to as the reflection relation, and $D(0,0; y,l;x,k)=1$. Moreover, we assume that the operators $M$ and $D$ satisfy the "star-triangle relation", given as
\begin{equation}\label{star triangle}
\begin{gathered}
       M(s,q)_{w,k; z,m}  D(s+t,q+p; y,l; z,m)  M(t,p)_{z,m; x,j} =\\
       D(t,p; y,l; w,k) M(s+t,q+p)_{w,k; x,j}  D(s,q; y,l; x,j)\:.
\end{gathered}
\end{equation}
Utilizing $M$ and $D$ operators, the next lemma addresses the question of forming infinitely many Bailey pairs after finding a particular one.

\begin{lemma}[Bailey Lemma]
Suppose $\alpha(x,m;t,p)$ and $\beta(x,m; t,p)$ form an integral Bailey pair with respect to $t \in \mathbb{C}$ and $p \in \mathbb{Z}$. Then, the sequences of functions $\alpha'(x,k;t+s,p+q)$ and $\beta'(x,k; t+s,p+q)$, $k \in \mathbb{Z}$, defined by 
\begin{align}
    \alpha'(x,k;t+s,p+q) &= D(s,q; y,l; x,k)\alpha(x,k; t,p) \:,\\
    \beta'(x,k; t+s,p+q)&= D(-t,-p;y,l;x,k)M(s,q)_{x,k;z,m}D(s+t,p+q;y,l;z,m) \beta(z,m; t,p)\:,
\end{align}
form a Bailey pair with respect to the new parameters $t+s$ and $p+q$ where $s, y\in \mathbb{C}$, $q,l \in \mathbb{Z}$ are arbitrary and the operator $D(s,q; y,l; x,k)$ is described as above.
\end{lemma}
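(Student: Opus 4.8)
The plan is to verify the defining Bailey-pair relation $\eqref{definition}$ for the new data, i.e. to show
\begin{equation*}
\beta'(w,k;t+s,p+q) = M(t+s,p+q)_{w,k;x,j}\,\alpha'(x,j;t+s,p+q)\;,
\end{equation*}
by substituting the proposed expressions for $\alpha'$ and $\beta'$ and manipulating the composition of integral-sum operators. First I would write out the right-hand side explicitly, inserting the definition of $\alpha'$:
\begin{equation*}
M(t+s,p+q)_{w,k;x,j}\,\alpha'(x,j;t+s,p+q) = M(t+s,q+p)_{w,k;x,j}\,D(s,q;y,l;x,j)\,\alpha(x,j;t,p)\;.
\end{equation*}
Then the key move is to apply the operator star-triangle relation $\eqref{star triangle}$ read from right to left, using it to rewrite $D(t,p;y,l;w,k)\,M(s+t,q+p)_{w,k;x,j}\,D(s,q;y,l;x,j)$ as $M(s,q)_{w,k;z,m}\,D(s+t,q+p;y,l;z,m)\,M(t,p)_{z,m;x,j}$. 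To bring the expression into that form I first multiply through by $D(t,p;y,l;w,k)$ and use the reflection relation $D(t,p;y,l;w,k)D(-t,-p;y,l;w,k)=1$ to reconstruct the $D(-t,-p;\dots)$ prefactor that appears in the definition of $\beta'$.

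Concretely, the chain of equalities I expect is: starting from $M(t+s,p+q)_{w,k;x,j}D(s,q;y,l;x,j)\alpha(x,j;t,p)$, insert $1 = D(-t,-p;y,l;w,k)D(t,p;y,l;w,k)$ on the left; recognize the cluster $D(t,p;y,l;w,k)M(s+t,q+p)_{w,k;x,j}D(s,q;y,l;x,j)$ and replace it via $\eqref{star triangle}$ by $M(s,q)_{w,k;z,m}D(s+t,q+p;y,l;z,m)M(t,p)_{z,m;x,j}$; this leaves $D(-t,-p;y,l;w,k)M(s,q)_{w,k;z,m}D(s+t,q+p;y,l;z,m)\big(M(t,p)_{z,m;x,j}\alpha(x,j;t,p)\big)$; finally apply the hypothesis that $\alpha,\beta$ form a Bailey pair, so the parenthesized factor is $\beta(z,m;t,p)$, and the whole expression becomes exactly $\beta'(w,k;t+s,p+q)$ as defined. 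That closes the argument.

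The main obstacle — or rather the main point requiring care — is bookkeeping of which variables each operator acts on and ensuring the operator compositions are legitimately associative, i.e. that the order of integration/summation can be interchanged so that the star-triangle relation, which is an identity of operators, can be inserted inside the larger composition. One should check that all the integral-sum operators converge on the relevant domain of parameters (shifted by $s,q$) and that Fubini-type interchanges are valid; in the hyperbolic hypergeometric setting this is the usual contour/convergence caveat and I would either invoke it as holding in a suitable parameter region or cite the analogous verification in \cite{Gahramanov:2015cva,Gahramanov:2021pgu}. A secondary point is to confirm that the dummy discrete/continuous labels ($z,m$ versus $x,j$ versus $w,k$) match the index conventions of $\eqref{star triangle}$ after the relabelling $\sigma_1\leftrightarrow w,k$, $\sigma_2 \leftrightarrow x,j$, $\sigma_0 \leftrightarrow z,m$; once that dictionary is fixed the computation is essentially a one-line application of the operator star-triangle identity together with the reflection relation.
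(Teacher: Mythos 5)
Your proof is correct and follows exactly the paper's own argument: substitute the definitions of $\alpha'$ and $\beta'$ into the Bailey-pair relation, use the reflection relation $D(t,p;y,l;w,k)D(-t,-p;y,l;w,k)=1$ to reduce the claim to the operator star-triangle relation $\eqref{star triangle}$, and invoke the hypothesis $M(t,p)_{z,m;x,j}\alpha(x,j;t,p)=\beta(z,m;t,p)$. Your added remarks on operator associativity and Fubini-type interchanges are a reasonable caveat that the paper leaves implicit, but the route is the same.
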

 \begin{proof}
    The proof follows easily from the definitions. We substitute $\alpha'(x,j;t+s,p+q)$ and $\beta'(x,k; t+s,p+q)$ into the relation defining a Bailey pair. We want to show,
\begin{equation}
    \beta'(w,k;t+s,p+q) = M(t+s,p+q)_{w,k;x,j}\alpha'(x,j;t+s,p+q)\:, 
\end{equation}
\begin{gather}
    D(-t,-p;y,l;w,k)M(s,q)_{w,k;z,m}D(s+t,p+q;y,l;z,m) \beta(z,m; t,p) = \\
    M(s+t, p+q)_{w,k;x,j} D(s,q;y,l; x,j)\alpha(x,j; t,p) \;. 
\end{gather}
Using the reflection relation displayed
above for $D$ operators, the problem reduces to the star-triangle relation,
\begin{equation}
\begin{gathered}
       M(s,q)_{w,k; z,m}  D(s+t,q+p; y,l; z,m)  M(t,p)_{z,m; x,j} =\\
       D(t,p; y,l; w,k) M(s+t,q+p)_{w,k; x,j}  D(s,q; y,l; x,j)\:,
\end{gathered}
\end{equation}
which we have assumed to be true for $M$ and $D$ operators. 
 \end{proof}
 
 We will be constructing such $M$ and $D$ operators satisfying the star-triangle relation in virtue of the integral identities $\eqref{SU2identity}$ on $SU(2)$ and $\eqref{U1identity}$ on $U(1)$ in the following sections to construct Bailey pairs. We should mention that there is no systematic way of constructing Bailey pairs.

\subsubsection{\texorpdfstring{$SU(2)$}{SU(2)} gauge symmetry}
Let us first construct the operators that will be used to satisfy $\eqref{star triangle}$. We can basically build the operators via the characteristic properties of hypergeometric functions.
\begin{equation}
\begin{aligned}
    D(t,p; y,l;w,k)=\gamma^{(2)}(-i(t+y\pm w+\omega\rho)-i\omega_1(p\pm k+r\sigma+l);-i\omega_1r,-i\omega)\\
	\times \gamma^{(2)}(-i(t-y\pm w+\omega(1-\rho))-i\omega_1(p\pm k+r(1-\sigma)-l);-i\omega_1r,-i\omega)\\
	\times \gamma^{(2)}(-i(t-y\pm w+\omega\rho)-i\omega_2(r-(p\pm k+r\sigma+l));-i\omega_2r,-i\omega)\\
	\times \gamma^{(2)}(-i(t-y\pm w+\omega(1-\rho))-i\omega_2(r-(p\pm k+r(1-\sigma)-l));-i\omega_2r,-i\omega).
\end{aligned}
\end{equation}

Obviously, one can see that 
\begin{equation}
\begin{aligned}
D(t,p; y,l;w,k)  D(-t,-p;y,l;w,k)=1 \;,
\end{aligned}
\end{equation}
and 
\begin{equation}
\begin{aligned}
D(0,0; y,l;w,k)  =1 \;.
\end{aligned}
\end{equation}

Now we can construct the integral sum operator as follows
\begin{equation}
\begin{aligned}
M(t,p)_{z,m;x,j}=&
\frac{1}{C(t,p)}\sum_{j=0}^{[ r/2 ]} \int _{-\infty}^{\infty} 
\gamma^{(2)}(-i(-t+z\pm x)-i\omega_1(m-p\pm j);-i\omega_1r,-i\omega)      \\
	&\times  \gamma^{(2)}(-i(-t+z\pm x)-i\omega_2(r-(m-p\pm j));-i\omega_2r,-i\omega) \\
	&\times  \gamma^{(2)}(-i(-t-z\pm x)-i\omega_1(-m-p\pm j);-i\omega_1r,-i\omega) \\
	&\times \gamma^{(2)}(-i(-t-z\pm x)-i\omega_r(r-(-m-p\pm j));-i\omega_2r,-i\omega)
	 	\frac{[\mathrm{d}_j {x}]}{2r\sqrt{-\omega_1\omega_2}}\:,
\end{aligned}\label{SU2MM}
\end{equation}

where the measure of this integral operator is written as the following
\begin{equation}
\begin{aligned}
 \relax [\mathrm{d}_j {x}] = \frac{ \epsilon (j)\mathrm{d}x}{\gamma^{(2)}(\pm 2ix \pm   2i\omega_1j;-i\omega_1r,-i\omega)\gamma^{(2)}(\pm2ix-i\omega_2(r\pm2j);-i\omega_2r,-i\omega)}\:, \label{measure}
\end{aligned}
\end{equation}

and the contribution of the spin-independent function is
\begin{equation}
\begin{aligned}
C(t,p)=&\gamma^{(2)}(-i(-2t)-i\omega_1(-2p);-i\omega_1r,-i\omega)\\
&\times\gamma^{(2)}(-i(-2t)-i\omega_2(r-(-2p));-i\omega_2r,-i\omega) \;.\label{ctp}
\end{aligned}
\end{equation}
Using these operators and the identity $\eqref{star triangle}$, one can find the parameters as follows
\begin{equation}
\begin{aligned}
     {a}_{1,2} & =  -s\pm w, & {a}_3  &=s+t+y+\omega\rho\:,
     \\ 
     {a}_4 & = s+t-y+\omega(1-\rho), & {a}_{5,6} & = -t\pm x\:,
     \\
    {u}_{1,2} & =  -q\pm k, & u_3 & =q+p+l+r\sigma\:,
     \\
    {u}_4 & = q+p-l+r(1-\sigma), & {u}_{5,6}  & =  -p\pm m\:,
\end{aligned}
\end{equation}

\subsubsection{\texorpdfstring{$U(1)$}{U(1)} gauge symmetry }
For $U(1)$ gauge symmetry we need to redefine the operators such that
\begin{equation}
\begin{aligned}
    D(t,p; y,l;w,k)=\gamma^{(2)}(-i(t+y+ w+\omega\rho)-i\omega_1(p+ k+r\sigma+l);-i\omega_1r,-i\omega)\\
	\times \gamma^{(2)}(-i(t-y- w+\omega(1-\rho))-i\omega_1(p- k+r(1-\sigma)-l);-i\omega_1r,-i\omega)\\
	\times \gamma^{(2)}(-i(t-y+ w+\omega\rho)-i\omega_2(r-(p+ k+r\sigma+l));-i\omega_2r,-i\omega)\\
	\times \gamma^{(2)}(-i(t-y- w+\omega(1-\rho))-i\omega_2(r-(p- k+r(1-\sigma)-l));-i\omega_2r,-i\omega)\:.
\end{aligned}
\end{equation}
Obviously, one can again catch the reflection property of the operator such that
\begin{equation}
\begin{aligned}
D(t,p; y,l;w,k)  D(-t,-p;y,l;w,k)=1 \;,
\end{aligned}
\end{equation}
and similarly 
\begin{equation}
\begin{aligned}
 D(0,0;y,l;w,k)=1 \;.
\end{aligned}
\end{equation}
Also for the integral-sum operator, we need a slight change such that
\begin{equation}
\begin{aligned}
M(t,p)_{z,m;x,j}=&
\frac{1}{C(t,p)}\sum_{j=0}^{[ r/2 ]}\epsilon (j) \int _{-\infty}^{\infty} 
\gamma^{(2)}(-i(-t+z+x)-i\omega_1(m-p+ j);-i\omega_1r,-i\omega)      \\
	&\times  \gamma^{(2)}(-i(-t+z+ x)-i\omega_2(r-(m-p+ j));-i\omega_2r,-i\omega) \\
	&\times  \gamma^{(2)}(-i(-t-z- x)-i\omega_1(-m-p- j);-i\omega_1r,-i\omega) \\
	&\times  \gamma^{(2)}(-i(-t-z- x)-i\omega_r(r-(-m-p- j));-i\omega_2r,-i\omega)\\
	&\times 
	 	\frac{d x}{2r\sqrt{-\omega_1\omega_2}}\:,
\end{aligned}
\end{equation}
where $C(t,p)$ has the same definition in (\ref{ctp}).

As we have demonstrated in the previous section, we can re-group these multipliers at the right-hand side of the $U(1)$ gauge symmetry integral identity. By using the same methodology that was used to re-arrange the parameters of the $SU(2)$ gauge symmetry integral identity we can write the parameters of the $U(1)$ gauge symmetry integral identity as follows
\begin{equation}
\begin{aligned}
     {a}_{1},{b}_{1} & =  -s\pm w\:, & {a}_2 & =s+t+y+\omega\rho\:,
     \\ 
     {b}_2 & = s+t-y+\omega(1-\rho)\:, & {a}_{3},{b}_{3} & = -t\pm x\:,
     \\
    {u}_{1},{v}_{1} & =  -q\pm k\:, & u_2 & =q+p+l+r\sigma\:,
     \\
    {v}_2 & = q+p-l+r(1-\sigma)\:, & {u}_{3},{v}_{3} & =  -p\pm m \:.
\end{aligned}
\end{equation}

\subsection{Star-star relation}

We will now be discussing Bailey pairs generated from an initial explicit pair. Noting that $M(t,p)_{z,m;x,j}$ is an integral-sum operator acting on a sequence of functions $f_j(x)$, the relation $\eqref{definition}$ suggests to start with $\alpha(x,j;t,p) = \delta_{jn}\delta(x-u)$ where $n\in \mathbb{Z}$, $u\in \mathbb{C}$ are new parameters.

Then, $\beta(z,m;t,p)$ of the following form
\begin{equation}
 \begin{aligned}
    \beta(z,m; t,p) &= M(t,p)_{z,m;x,j}\delta_{jn}\delta(x-u) \\
    & := M(t,p;z,m; u,n) \;,
\end{aligned}   
\end{equation}
forms a Bailey pair with $\alpha(x,j;t,p)$. From here, we generate new pairs with the Bailey lemma,
\begin{align}
    \alpha(x,k; t+s;p+q) &= D(s,q; y,l;x,k)\alpha(x,k;t,p)\:, \\
    \beta(x,k; t+s; p+q) &= D(-t,-p; y,l;x,k)M(s,q)_{x,k;z,m}D(s+t,p+q; y,l;z,m)\beta(z,m;t,p) \;. 
\end{align}
The relation $\eqref{definition}$ does not give us a particularly interesting result as it yields the star-triangle relation, which we have used to prove the Bailey lemma
\begin{equation}
\begin{aligned}\label{second bailey relation}
    M&(s,q)_{w,k;z,m}D(s+t, p+q; y,l;z,m)M(t,p;z,m;u,n) \\
    &= D(t,p; y,l;w,k)M(s+t, p+q ; w,k; u,n)D(s,q;y,l,u,n)  \;.
\end{aligned}
\end{equation}
An immediate consequence of $\eqref{second bailey relation}$ is the functions $\Tilde{\alpha}(z,m;s,q)$ and $\Tilde{\beta}(w,k;s,q)$ defined by
\begin{align}
\Tilde{\alpha}(z,m;s,q) &= D(s+t, p+q; y,l;z,m)M(t,p;z,m;u,n)\:, \\
\Tilde{\beta}(w,k;s,q) &=D(t,p;y,l;w,k)M(s+t, p+q ; w,k; u,n)D(s,q;y,l,u,n) \;,
\end{align}
form a Bailey pair with respect to parameters $s \in \mathbb{C}$, $q \in \mathbb{Z}$. Applying the lemma once again, we find
\begin{align}
    \Tilde{\alpha}'(z,m;s+c,q+d) =& D(c,d;a,b;z,m)D(s+t, p+q;y,l;z,m)M(t,p;z,m;u,n)\:, \\
    \begin{split}
       \Tilde{\beta}'(x,j;s+c,q+d) =& D(-s,-q;a,b;x,j)M(c,d)_{x,j;w,k}D(s+c,q+d;a,b;w,k)\\
    & \times D(t,p;y,l;w,k)M(s+t, p+q ; w,k; u,n)D(s,q;y,l,u,n)  \;, 
    \end{split}
\end{align}
where $a, c \in \mathbb{C}$ and $b,d \in \mathbb{Z}$ are arbitrary. The relation
\begin{equation}
\Tilde{\beta}'(x,j;s+c,q+d) = M(s+c,q+d)_{x,j;z,m} \Tilde{\alpha}'(z,m;s+c,q+d)   \;,
\end{equation}
yields a non-trivial integral identity
\begin{equation}
\begin{aligned}
&M(c,d)_{x,j;w,k}D(s+c,q+d;a,b;w,k)D(t,p;y,l;w,k)M(s+t, p+q ; w,k; u,n) \\
&= D(-s,-q;y,l,u,n)D(s,q;a,b;x,j)\\
 &\text{ } \times M(s+c,q+d)_{x,j;z,m} D(c,d;a,b;z,m)D(s+t, p+q;y,l;z,m)M(t,p;z,m;u,n) \;,
\end{aligned}
\end{equation}
which can be recognized as the star-star relation.\\

\subsubsection{ \texorpdfstring{$SU(2)$}{SU(2)} gauge symmetry }
Now we need to construct this identity as an integral form. For constructing bailey pairs' operators, we will just write the same form of the operators that were used for $SU(2)$ gauge symmetry.
However, we have one more operator

\begin{equation}
\begin{aligned}
M(t,p;z,m;u,n)=&
\frac{\Delta_n^u}{C(t,p)} 
\gamma^{(2)}(-i(-t+z\pm u)-i\omega_1(m-p\pm n);-i\omega_1r,-i\omega)      \\
	&\times  \gamma^{(2)}(-i(-t+z\pm u)-i\omega_2(r-(m-p\pm n));-i\omega_2r,-i\omega) \\
	&\times  \gamma^{(2)}(-i(-t-z\pm u)-i\omega_1(-m-p\pm n);-i\omega_1r,-i\omega) \\
	&\times  \gamma^{(2)}(-i(-t-z\pm u)-i\omega_r(r-(-m-p\pm n));-i\omega_2r,-i\omega)\:,
\end{aligned}\label{SU2M}
\end{equation}

where 
\begin{equation}
\begin{aligned}
 \relax \Delta_n^u = \frac{\epsilon (n)}{2r\sqrt{-\omega_1\omega_2}}\frac{ 1}{\gamma^{(2)}(\pm 2iu \pm   2i\omega_1n;-i\omega_1r,-i\omega)\gamma^{(2)}(\pm2iu-i\omega_2(r\pm2n);-i\omega_2r,-i\omega)}\:,
\end{aligned}
\end{equation}
and $C(t,p)$ is still (\ref{ctp}).

Then one can write down the parameters as follows:
\begin{equation}
\begin{aligned}
     {a}_{1,2} & =  -c\pm x\:, & a_3 & =s+c+a+\omega\rho\:, & a_4 & = s+c-a+\omega(1-\rho)\:,
     \\ 
     {a}_5 & =  t+y+\omega\rho\:, & {a}_6 & = t-y+\omega(1-\rho)\:, & {a}_{7,8} & = -(s+t)\pm u\:, 
     \\
    {u}_{1,2} & =  -d\pm j\:, & u_3 & =q+d+b+r\sigma\:, & u_4 & = q+d-b+r(1-\sigma)\:,
     \\
    {u}_5 & =  p+l+r\sigma\:, & {u}_6 & =  p-l+r(1-\sigma)\:, & u_{7,8} & =-(p+q)\pm n\:.   
\end{aligned}
\end{equation}
\\
For the right-hand side of the integral identity, we can write the simple equality between the parameters as follows,
\begin{align}
\begin{aligned}
     \tilde{a}_i & =  a_i-s, & \tilde{u}_i & =u_i-q,  & \text{if} \;\;\; i=1,2,3,4 \:,
     \\ 
     \tilde{a}_i & =  a_i+s,  & \tilde{u}_i & = u_i+q, & \text{if} \;\;\; i=5,6,7,8\:.
\end{aligned}
\end{align}

\subsubsection{\texorpdfstring{$U(1)$}{U(1)} gauge symmetry }

For this part, all we need to do is just re-use the operators that were designed to reach the Bailey pair reconstruction of star-triangle form for the $U(1)$ gauge symmetry. 

Again, we need the following operator for the construction
\begin{equation}
\begin{aligned}
M(t,p;z,m;x,j)=&
\frac{1}{C(t,p)} \frac{\epsilon (n)}{r\sqrt{-\omega_1\omega_2}} 
\\
&\times \gamma^{(2)}(-i(-t+z+x)-i\omega_1(m-p+ j);-i\omega_1r,-i\omega)      \\
	&\times  \gamma^{(2)}(-i(-t+z+ x)-i\omega_2(r-(m-p+ j));-i\omega_2r,-i\omega) \\
	&\times  \gamma^{(2)}(-i(-t-z- x)-i\omega_1(-m-p- j);-i\omega_1r,-i\omega) \\
	&\times \gamma^{(2)}(-i(-t-z- x)-i\omega_r(r-(-m-p- j));-i\omega_2r,-i\omega)\:,
\end{aligned}
\end{equation}
where $C(t,p)$ still lives as in (\ref{ctp}).

Then one can write down the parameters as follows:
\begin{equation}
\begin{aligned}
     {a}_{1},{b}_{1} & =  -c\pm x\:, & a_2 & =s+c+a+\omega\rho\:, & b_2 & = s+c-a+\omega(1-\rho)\:,
     \\ 
     {a}_3 & =  t+y+\omega\rho\:, & {b}_3 & = t-y+\omega(1-\rho)\:, & {a}_{4},{b}_{4}& = -(s+t)\pm u\:,  
     \\
    {u}_{1},{v}_{1} & =  -d\pm j\:, & u_2 & =q+d+b+r\sigma\:, & v_2 & = q+d-b+r(1-\sigma)\:,
     \\
    {u}_3 & =  p+l+r\sigma\:, & {v}_3 & =  p-l+r(1-\sigma)\:, & {u}_{4},{v}_{4} & =-(p+q)\pm n\:,
\end{aligned}
\end{equation}
For the right-hand side of the integral identity, we can write the simple equality between the parameters as follows:
\begin{align}
\begin{aligned}
     \tilde{a}_i & =  a_i-s, & \tilde{b}_i & =  b_i-s, & \tilde{u}_i & =u_i-q, & \tilde{v}_i  & = v_i-q, & \text{if} \;\;\; i=1,2\:,
     \\ 
     \tilde{a}_i & =  a_i+s, & \tilde{b}_i & =  b_i+s, & \tilde{u}_i & = u_i+q, & \tilde{v}_i & =v_i+q, & \text{if} \;\;\; i=3,4\:.
\end{aligned}
\end{align}

\subsection{Pentagon identity}

We will now consider a different definition of Bailey pairs and its relation to a pentagon identity on $U(1)$. The discussion will be a slight generalization of \cite{Kashaev:2012cz} (see also \cite{Gahramanov:2021pgu}) with discrete parameters. 
\begin{definition}
The functions $\alpha(x,n;t,p)$ and $\beta(x,n;t,p)$ with variables $x\in \mathbb{C}$ and $n \in \mathbb{Z}$ are said to form a pentagon Bailey pair with respect to parameters $t\in \mathbb{C}$ and $p \in \mathbb{Z}$ if the following relation is satisfied
\begin{equation}
    \beta(x,n;t,p) = \sum_{m}\int_{-\infty}^{\infty} \mathrm{d}z \; \mathcal{B}(t+x-z,p+n-m; t-x+z, p-n+m) \alpha(z,m;t,p) \;,
\end{equation}
where $\mathcal{B}(z_1,u_1;z_2,u_2)$ is defined as in $\eqref{pentagon_b}$.
\end{definition}

\begin{theorem}
Suppose $\alpha(z,m;t,p)$ and $\beta(x,n; t,p)$ form a pentagon Bailey pair with respect to $t \in \mathbb{C}$ and $p \in \mathbb{Z}$. Then, the sequences of functions $\alpha'(z,m;t+s,p+q)$ and $\beta'(w,k; t+s,p+q)$, $w\in \mathbb{C}$, $k \in \mathbb{Z}$, defined by
\begin{align}
    \label{pentagon alpha} \alpha'(z,m,t+s,p+q)=& \mathcal{B}(z+t+y,m+p+l;2s,2q)\alpha(z,m;t,p)\:, \\
    \label{pentagon beta}
    \beta'(w,k;t+s,p+q)=& \sum_{n}\int_{-\infty}^{\infty} \mathrm{d}x \; \mathcal{B}(s+w-x,q+k-n; y+x,l+n) \\ \nonumber
    &\times \mathcal{B}(s+2t+y+w, q+2p+l+k;s-w+x, q-k+n)\beta(x,n;t,p)\:,
\end{align}
form a Bailey pair with respect to $t+s$ and $p+q$, where $y,s \in \mathbb{C}$, $l,q \in \mathbb{Z}$. 
\end{theorem}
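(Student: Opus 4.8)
The plan is to verify directly that $(\alpha',\beta')$ satisfies the defining relation of a pentagon Bailey pair with respect to $t+s$ and $p+q$, reducing the whole statement to a single application of the pentagon identity \eqref{pentagon}. I would first substitute \eqref{pentagon alpha} on the right-hand side of the relation to be proved and \eqref{pentagon beta} on the left-hand side, and then replace $\beta(x,n;t,p)$ by its integral representation supplied by the hypothesis that $(\alpha,\beta)$ is a pentagon Bailey pair. After interchanging the order of the $(x,n)$- and $(z,m)$-integration/summation — legitimate on the open parameter region in which all the integrands are absolutely integrable, the general case then following by analytic continuation in the continuous spectral parameters — the claim collapses to the pointwise kernel identity
\[
\sum_{n}\int_{-\infty}^{\infty}\!\mathrm{d}x\;\mathcal{B}(s{+}w{-}x,q{+}k{-}n;y{+}x,l{+}n)\,\mathcal{B}(s{+}2t{+}y{+}w,q{+}2p{+}l{+}k;s{-}w{+}x,q{-}k{+}n)\,\mathcal{B}(t{+}x{-}z,p{+}n{-}m;t{-}x{+}z,p{-}n{+}m)
\]
being equal to $\mathcal{B}\big((t{+}s){+}w{-}z,(p{+}q){+}k{-}m;(t{+}s){-}w{+}z,(p{+}q){-}k{+}m\big)\,\mathcal{B}(z{+}t{+}y,m{+}p{+}l;2s,2q)$, up to the normalisation factors $\tfrac{1}{r\sqrt{-\omega_1\omega_2}}$, $\epsilon(\cdot)$ and $e^{\pi i C/2}$ that are implicit in the integral-sum symbols.

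To recognise this kernel identity as an instance of \eqref{pentagon} I would pass to the building block $\phi(z,u):=\gamma^{(2)}(-iz-i\omega_1 u;-i\omega_1 r,-i\omega)\,\gamma^{(2)}(-iz-i\omega_2(r-u);-i\omega_2 r,-i\omega)$, in terms of which $\mathcal{B}(z_1,u_1;z_2,u_2)=\phi(z_1,u_1)\phi(z_2,u_2)/\phi(z_1{+}z_2,u_1{+}u_2)$; the reflection property of $\gamma^{(2)}$ recalled above yields the single useful relation $\phi(z,u)\,\phi(\omega{-}z,r{-}u)=1$, and $\mathcal{B}$ is manifestly symmetric under exchange of its two argument pairs. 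Using these, the product of the first two $\mathcal{B}$-factors in the integrand can be rewritten, up to a factor $\phi(2s,2q)\,\phi(s{+}2t{+}y{+}w,q{+}2p{+}l{+}k)\,\phi(\omega{-}2s{-}2t,r{-}2q{-}2p)/\phi(s{+}w{+}y,q{+}k{+}l)$ independent of the integration variables $x,n$, as $\mathcal{B}(s{+}w{-}x,q{+}k{-}n;s{-}w{+}x,q{-}k{+}n)\,\mathcal{B}(\omega{-}2s{-}2t{-}y{-}x,\,r{-}2q{-}2p{-}l{-}n;\,y{+}x,\,l{+}n)$, while the third factor, read with its two pairs exchanged, is $\mathcal{B}(t{-}x{+}z,p{-}n{+}m;t{+}x{-}z,p{+}n{-}m)$. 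The integrand has thus become the canonical product $\prod_{i=1}^3\mathcal{B}(a_i{-}x,u_i{-}n;b_i{+}x,v_i{+}n)$ with $(a_1,b_1)=(s{+}w,\,s{-}w)$, $(a_2,b_2)=(\omega{-}2s{-}2t{-}y,\,y)$, $(a_3,b_3)=(t{+}z,\,t{-}z)$ and $(u_1,v_1)=(q{+}k,\,q{-}k)$, $(u_2,v_2)=(r{-}2q{-}2p{-}l,\,l)$, $(u_3,v_3)=(p{+}m,\,p{-}m)$; one checks $\sum_i(a_i{+}b_i)=\omega$ and $\sum_i(u_i{+}v_i)=r$, so \eqref{pentagon} applies with vanishing phase $C$.

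Invoking \eqref{pentagon} then replaces the $(x,n)$-integral by $\mathcal{B}(a_1{+}b_2,u_1{+}v_2;a_2{+}b_3,u_2{+}v_3)\,\mathcal{B}(a_1{+}b_3,u_1{+}v_3;a_2{+}b_1,u_2{+}v_1)$. Expanding these two $\mathcal{B}$'s once more through $\phi$ and simplifying each argument with $\phi(z,u)\phi(\omega{-}z,r{-}u)=1$, the constant picked up in the regrouping step and all remaining factors that do not depend on $z,m,w,k$ cancel in pairs — e.g.\ $\phi(s{+}w{+}y,q{+}k{+}l)$ cancels against the denominator of that constant and $\phi(s{+}2t{+}y{+}w,q{+}2p{+}l{+}k)\,\phi(\omega{-}s{-}2t{-}y{-}w,r{-}q{-}2p{-}l{-}k)=1$ — and the survivor is precisely $\mathcal{B}\big((t{+}s){+}w{-}z,(p{+}q){+}k{-}m;(t{+}s){-}w{+}z,(p{+}q){-}k{+}m\big)\,\mathcal{B}(z{+}t{+}y,m{+}p{+}l;2s,2q)$, times the expected power of $r\sqrt{-\omega_1\omega_2}$. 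Reading the chain of reductions backwards then identifies this expression with $\beta'(w,k;t{+}s,p{+}q)$, which is exactly the pentagon Bailey pair relation for $(\alpha',\beta')$.

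The structural content of the proof is therefore just one use of \eqref{pentagon} together with the elementary cocycle and reflection properties of $\mathcal{B}$; I expect the only real difficulty to be the bookkeeping, that is, organising the two regrouping steps so that every spurious $\gamma^{(2)}$-factor and every normalisation constant ($\tfrac{1}{r\sqrt{-\omega_1\omega_2}}$, $\epsilon(\cdot)$, $e^{\pi i C/2}$, and the choice of summation range) cancels exactly — yielding an honest equality rather than equality up to a constant — and tracking the discrete variables so that the lens balancing conditions, which hold only modulo $r$, are respected at every stage.
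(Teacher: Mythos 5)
Your proof follows essentially the same route as the paper's: substitute the primed pair into the defining relation, exchange the order of the $(x,n)$- and $(z,m)$-integrations, and reduce the resulting kernel identity to a single application of the pentagon identity \eqref{pentagon} via an explicit parametrization. The only difference is cosmetic — you regroup the $\mathcal{B}$-factors with the reflection property so that $\sum_i(u_i+v_i)=r$ and the phase $C$ vanishes, whereas the paper uses the parametrization $a_1=s+w$, $b_1=y$, $a_2=\omega-(2s+2t+y)$, $b_2=s-w$, $a_3=t+z$, $b_3=t-z$ with $\sum_i(u_i+v_i)=0$; both are valid, and your bookkeeping of the spurious $\gamma^{(2)}$-factors is if anything more explicit than the paper's.
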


\begin{proof}
We need to show the relation defining $\beta'(w,k;t+s,p+q)$ and $\alpha'(z,m,t+s,p+q)$ is satisfied,
\begin{equation}
     \beta'(w,k;t+s,p+q) = \sum_{m} \int_{-\infty}^{\infty} \mathrm{d}z \; \mathcal{B}(t+s+w-z,p+q+k-m; t+s-w+z, p+q-k+m) \alpha'(z,m;t+s,p+q) \;.
\end{equation} 
Substituting $\eqref{pentagon alpha}$ and $\eqref{pentagon beta}$ for $\alpha'(z,m,t+s,p+q)$ and $\beta'(w,k;t+s,p+q)$, we arrive at
\begin{equation}
    \begin{aligned}
  \sum_{n}\int_{-\infty}^{\infty} &\mathrm{d}x \; \mathcal{B}(s+w-x,p+k-n; y+x,l+n)
 \mathcal{B}(s+2t+y+w, q+2p+l+k;s-w+x, q-k+n)  \\
\times  \sum_{m}&\int_{-\infty}^{\infty} \mathrm{d}z \; \mathcal{B}(t+x-z,p+n-m; t-x+z, q-n+m) \alpha(z,m;t,p)\\ 
    =&  \sum_{m} \int_{-\infty}^{\infty} \mathrm{d}z \; \mathcal{B}(t+s+w-z,p+q+k-m; t+s-w+z, p+q-k+m)
    \\
    &\times \mathcal{B}(z+t+y,m+p+l;2s,2q)\alpha(z,m;t,p) \;.
    \end{aligned}   
\end{equation}
Rearrangement of some terms yields
\begin{equation} \label{pentagon integral}
\begin{aligned}
     \sum_{m}\int_{-\infty}^{\infty} &\mathrm{d}z \; \alpha(z,m;t,p) \sum_{n}\int_{-\infty}^{\infty} \mathrm{d}x \; \mathcal{B}(s+w-x,p+k-n; y+x,l+n) \\
&\times \mathcal{B}(s+2t+y+w, q+2p+l+k;s-w+x, q-k+n) 
\\
&\times \mathcal{B}(t+x-z,p+n-m; t-x+z, q-n+m) \\
=  \sum_{m} \int_{-\infty}^{\infty} &\mathrm{d}z \; \alpha(z,m;t,p)
\mathcal{B}(t+s+w-z,p+q+k-m; t+s-w+z, p+q-k+m)\\
&\times \mathcal{B}(z+t+y,m+p+l;2s,2q) \;.
\end{aligned}
\end{equation}
 With the choice of $\mathcal{B}(z_1,u_1;z_2,u_2)$ in $\eqref{pentagon_b}$, we see that $\eqref{pentagon integral}$ is indeed satisfied with the following parametrization,

\begin{align}
\begin{aligned}
     a_1&=s+w, & b_1&=y, & u_1&=q+k, & v_1&=l \:,
     \\ 
     a_2&=\omega-(2s+2t+y), & b_2&=s-w, & u_2&=-(2q+2p+l), & v_2&=q-k \:,
     \\
     a_3&=t+z, & b_3&=t-z, & u_3&=p+m, & v_3&=p-m \:.
\end{aligned}
\end{align}
and we arrive at the pentagon relation $\eqref{pentagon}$. Hence,  $\beta'(w,k;t+s,p+q)$ and $\alpha'(z,m,t+s,p+q)$ satisfy the Bailey lemma.
\end{proof}

\section{Conclusions}

We have studied novel Bailey pairs constructed from the $3d$ $\mathcal{N}=2$ dual supersymmetric gauge theories on the lens space $S_b^3/\mathbb{Z}_r$. The equality of the supersymmetric partition functions of the dual theories on the lens space leads to non-trivial hyperbolic hypergeometric integral identities. The integral identities have previously been discussed in terms of lattice spin models (Ising-like and IRF-type) in statistical mechanics and pentagon identity as a $2-3$ Pachner move \cite{Bozkurt:2020gyy,Catak:2021coz,Mullahasanoglu:2021xyf,Dede:2022ofo}. In this work, we have constructed Bailey pairs that generate these integral identities. These Bailey pair constructions allow us to study the vertex-type integrable models \cite{Gahramanov:2015cva}, knot invariants \cite{Kashaev:2012cz}, supersymmetric quiver gauge theories \cite{Brunner:2017lhb}, etc. One can use the Bailey pair construction to generate integral identities for supersymmetric dualities.  One possible future direction is to examine other supersymmetric IR dualities in this context.

In the context of integrable lattice spin models, we construct the Boltzmann weights with two types of spectral parameters\footnote{In the literature there are models with two rapidity parameters, see, e.g. \cite{Khachatryan:2012wy,Karakhanyan:2013pv}.}, and these discrete and continuous types of parameters are preserved in Bailey pair constructions.  It would be interesting to see the implications of this result. 

One can obtain the rational beta integral identities, namely the equality of supersymmetric gauge partition functions on $S^2$ by limiting $r\to\infty$ in the integral identities for dualities on $S^3_b/\mathbb{Z}_r$, see \cite{Eren:2019ibl,Sarkissian:2020pwa}. In this work, we constructed Bailey pairs for the integral identities with the balancing conditions $\sum_{i=1}^6u_i=r$ and $\sum_{i=1}^3u_i+v_i=r$. Since the limit $r \to \infty$ is problematic for the balancing conditions, it would be interesting to analyze the limiting procedure in our cases.

\section*{Acknowledgments}

The authors are grateful to Erdal Catak for the valuable discussions. Ilmar Gahramanov and Mustafa Mullahasanoglu are supported by the 1002-TUBITAK Quick Support Program under grant number 121F413. The work of Ilmar Gahramanov (Sec.4.1) is supported by the Russian Science Foundation grant number 22-72-10122. Ilmar Gahramanov is also partially supported by the Bogazici University Research Fund under grant number 20B03SUP3.

\appendix

\section{Constructing the Bailey pair for the star-triangle relation}

Let us construct the governing equation (\ref{star triangle}) for the star-triangle relation of the operators and the required identity for the Bailey pairs. We take the following steps. First, we replace the definitions of the operators on the left-hand side for a particular model. Then, the specific change of variables makes the integral part of the specific identity. Before obtaining the right side of \eqref{star triangle}, the integral identity allows us to calculate one of the integrals on the left. Finally, we call back the old variables to write the right side in proper operator form. One thing to be careful of is the spin-independent functions in the integral operator. We will mention it again when it appears in the calculation. 

Let's use the following shorthand notations in the calculations
\begin{equation}
    \begin{aligned}
    \gamma_h( z, y;\omega_1,\omega_2) =
     \gamma^{(2)}(-i z-i\omega_1y;-i\omega_1r,-i\omega)       \gamma^{(2)}(-i z-i\omega_2(r-y);-i\omega_2r,-i\omega) 
\:,
\end{aligned}
\end{equation}
where $\omega=\omega_1+\omega_2$ and
\begin{equation}
    \begin{aligned}
    \gamma_h(\pm z,\pm y;\omega_1,\omega_2) =\gamma_h( z, y;\omega_1,\omega_2)\gamma_h(- z,- y;\omega_1,\omega_2)\:,
\end{aligned}
\end{equation}

Recall that $[d_j x]$ is defined in (\ref{measure}).

Let us explicitly show how to equate the star-triangle equation \eqref{star triangle}. If one replaces the definitions of the operators, then obtains the following at the left-hand side 

\begin{equation}
\begin{aligned}
M(s,q)_{w,k; z,m}  &D(s+t,q+p; y,l; z,m)  M(t,p)_{z,m; x,j} =
\\
\frac{1}{C(s,q)C(t,p)}&\sum_{j=0}^{[ r/2 ]}\epsilon (j) \int _{-\infty}^{\infty}\frac{[d_j x]}{2r\sqrt{-\omega_1\omega_2}}
\sum_{m=0}^{[ r/2 ]}\epsilon (m) \int _{-\infty}^{\infty} \frac{[d_m z]}{2r\sqrt{-\omega_1\omega_2}}
\\  \times 
\gamma_h(&-s+ w\pm z,+ k-q\pm m;\omega_1,\omega_2) \:
\gamma_h(-s- w\pm z,- k-q\pm m;\omega_1,\omega_2)
\\  \times 
\gamma_h(&s+t+y\pm z+\omega\rho,q+p\pm m+r\sigma+l;\omega_1,\omega_2)\\
	\times  \gamma_h(&s+t-y\pm z+\omega(1-\rho),q+p\pm m+r(1-\sigma)-l;\omega_1,\omega_2) \\
	\times 	\gamma_h(&-t + z\pm x, m-p\pm j);\omega_1,\omega_2)  \:
		\gamma_h(-t - z\pm x,- m-p\pm j);\omega_1,\omega_2)\:,\label{app1} 
\end{aligned}
\end{equation}
where $C(t,p)=\gamma_h(- 2t,- 2p;\omega_1,\omega_2)$ in new notations as in (\ref{ctp}).

One can rewrite equation (\ref{app1}) under the given re-parametrization

\begin{equation}
\begin{aligned}
     {a}_{1,2} & =  -s\pm w, & {a}_3  &=s+t+y+\omega\rho\:,
     \\ 
     {a}_4 & = s+t-y+\omega(1-\rho), & {a}_{5,6} & = -t\pm x\:,
     \\
    {u}_{1,2} & =  -q\pm k, & u_3 & =q+p+l+r\sigma\:,
     \\
    {u}_4 & = q+p-l+r(1-\sigma), & {u}_{5,6}  & =  -p\pm m\:.\label{par1}
\end{aligned}
\end{equation}
Then, (\ref{app1}) becomes

\begin{equation}
\begin{aligned}
=\sum_{j=0}^{[ r/2 ]}\epsilon (j) \int _{-\infty}^{\infty}&\frac{[d_j x]}{2r\sqrt{-\omega_1\omega_2}}
\sum_{m=0}^{[ r/2 ]}\epsilon (m) \int _{-\infty}^{\infty} \frac{[d_m z]}{2r\sqrt{-\omega_1\omega_2}}
\\ \times
&\frac{\prod_{i=1}^6 \gamma_h(a_i \pm z,u_i \pm m;\omega_1,\omega_2)}{\gamma_h(a_1+a_2,u_1+u_2;\omega_1,\omega_2)\gamma_h(a_5+a_6,u_5+u_6;\omega_1,\omega_2)} \;. 
\end{aligned}
\end{equation}
One can integrate with respect to $z$ and sum on $m$ by using the integral identity (\ref{SU2identity}) and the result is

\begin{equation}
\begin{aligned}
   = \sum_{j=0}^{[ r/2 ]}\epsilon (j) &\int _{-\infty}^{\infty} 
    \frac{ \prod_{1\leq i<j\leq 6} \gamma_h(a_i + a_j,u_i + u_j;\omega_1,\omega_2) }{\gamma_h(a_1+a_2,u_1+u_2;\omega_1,\omega_2)\gamma_h(a_5+a_6,u_5+u_6;\omega_1,\omega_2)}
     \frac{[d_m x]}{2r\sqrt{-\omega_1\omega_2}}\;.\label{app2} 
\end{aligned}
\end{equation}

If one replaces the parameters (\ref{par1}) with their values found before, it is easy to see that (\ref{app2}) turns into
\begin{equation}
\begin{aligned}
=D(t,p;y,l;w,k)  M(s+t,q+p)_{w,k; x,j}  D(s,q;y,l;x,j) \:.
\end{aligned}
\end{equation}

This is the right-hand side of the star-triangle relation (\ref{star triangle}).

\bibliographystyle{utphys}
\bibliography{refYBE}



\end{document}